\newtheorem{theorem}{\indent Theorem}
\newtheorem{lemma}{\indent Lemma}
\newtheorem*{proof}{\indent Proof}
\newtheorem{remark}{\indent Remark}
\newtheorem{corollary}{\indent Corollary}
\newcommand{\Rmnum}[1]{\expandafter\@slowromancap\romannumeral #1@}
\begin{document}

\title{A Universal Framework of Superimposed RIS-Phase Modulation for MISO Communication}

\author{Jiacheng~Yao,~\IEEEmembership{Student Member,~IEEE}, Jindan~Xu,~\IEEEmembership{Member,~IEEE}, Wei~Xu,~\IEEEmembership{Senior Member,~IEEE}, Chau~Yuen,~\IEEEmembership{Fellow,~IEEE}, and Xiaohu~You,~\IEEEmembership{Fellow,~IEEE}
\thanks{Jiacheng Yao, Wei Xu, and Xiaohu You are with the National Mobile Communications Research Laboratory (NCRL), Southeast University, Nanjing 210096, China (\{jcyao, wxu, xhyu\}@seu.edu.cn).}
\thanks{Jindan Xu and Chau Yuen are with Engineering Product Development (EPD) Pillar, Singapore University of Technology and Design, Singapore 487372, Singapore (\{jindan\_xu, yuenchau\}@sutd.edu.sg).}}

%

\maketitle

\begin{abstract}
To fully exploit the additional dimension brought by reconfigurable intelligent surface (RIS), it is recently suggested by information theory that modulating information upon RIS phases is able to send extra information with increased communication rate. In this paper, we propose a novel superimposed RIS-phase modulation (SRPM) scheme to transfer extra messages by superimposing information-bearing phase offsets to conventionally optimized RIS phases. The proposed SRPM is interpreted as a universal framework for RIS phase modulation. Theoretical union bound of the average bit error rate (ABER) of the proposed SRPM is also derived with the maximum likelihood (ML) detection. The diversity order is characterized as $\frac{1}{2}$ for all parameter settings, which is useful for determining the optimal choice of the phase modulation parameters. Furthermore, we discover that doubling the number of either RIS reflecting elements or the transmit antennas is equivalent to a $3\,$dB increment in the transmit power for SRPM. Numerical results demonstrate the effectiveness of SRPM and reveal that it achieves reliable communication of more bits than existing schemes.
\end{abstract}

\begin{IEEEkeywords}
Reconfigurable intelligent surface (RIS), phase modulation, average bit error rate (ABER).
\end{IEEEkeywords}

\section{Introduction}
\IEEEPARstart{T}{he} rapid growth of wireless data demands in recent years has put forward higher requirements for the sixth-generation (6G) mobile network. Reconfigurable intelligent surface (RIS), which has received worldwide attention, is regarded as an important enabling key technology for the 6G network \cite{6g}.  By integrating a large number of passive low-cost controllable reflecting elements, RIS can build smart radio environments, thereby greatly improving both spectral and energy efficiencies \cite{ref1,ref2,xujindan}.\par

In current mainstream of studies on RIS \cite{ref3,ref4,ref5}, passive beamforming has been intensively investigated to intelligently adjust the RIS phases aiming at signal-to-noise ratio (SNR) maximization. However, it is recently discovered that passive beamforming is suboptimal in utilizing the RIS and RIS has shown the potential for delivering additional information despite being a passive device. Index modulation (IM) is one of the promising candidate techniques, in which the additional information bits are modulated in the ON/OFF states of the transmission entities \cite{im1}. Recently, many studies have focused on the combination of IM and RIS to improve the spectral efficiency \cite{im1,ref18,im2,im3}. However, due to the limited number of transmission entities in practice, the performance gain brought by RIS-based IM schemes is limited, which do not fully unlock the potential of RIS.

On the other hand, the channel capacity analysis in \cite{ref6} for RIS-assisted networks suggests to bear additional information with the reflection phase control. Moreover, from the perspective of degree-of-freedom (DoF), a significant improvement with theoretical justification was observed  by modulating information in RIS phases\cite{ref7}. To realize RIS phase modulation, studies \cite{ref8,ref9,ref10,ref11} have proposed various schemes for extra information transfer through RIS. In \cite{ref8}, a passive beamforming and information transfer (PBIT) scheme was proposed, where the ON/OFF states of each reflecting element were used to deliver an additional binary message. To overcome high outage probability caused by the varying number of activated elements in PBIT, the authors in \cite{ref9} proposed an RIS-based reflection pattern modulation scheme, referred to as RIS-RPM, where the RIS elements are grouped into subsets and only the subset corresponding to the extra information to sent was activated at a time. 
In \cite{ref10}, an RIS-based quadrature reflection modulation (RIS-QRM) scheme was proposed with improved performance. Instead of switching ON/OFF the reflecting elements, the RIS-QRM scheme used I/Q phases while activating all the reflecting elements. However, the transmission rate of these extra bits is limited by the number of RIS subsets which is usually a small value. In \cite{ref11}, an uplink system with two users was considered, where the extra data of the second user was transmitted by rotating the RIS phases by a specific offset angle. Similarly in \cite{ref12}, a phase rotation scheme was proposed to transmit additional information by integrating the principles of spatial modulation (SM) and phase modulation.\par

However, most existing methods superimposed upto one extra bit per reflecting element via the RIS phase modulation. Meanwhile, extra information introduces additional randomness of RIS element inactivation and RIS phase changes, which inevitably leads to dynamic and severe SNR degradation. Moreover, to the best of our knowledge, most theoretical analysis, e.g., in terms of the average bit error rate (ABER), for RIS phase modulation schemes considered the single-input single-output (SISO) setting for sake the of tractability. The ABER analysis for multi-antenna systems is still missing. \par

To tackle the above issues, we proposed a novel phase modulation scheme by leveraging tunable phase offsets superimposed upon the conventionally optimized RIS phase for passive beamforming. This scheme is inspired by the fact that moderate phase noise, e.g., caused by discrete phase constraints, has marginal impact on the performance for large-size RIS communication systems\cite{ref13,ref14,ref15,discrete}. Therefore in the proposed scheme, we exploit offsets superimposed upon the conventional RIS phase shifts design for extra information transfer. We present a general framework for the RIS-based phase modulation that enables higher-order information modulation and achieves better performance. Furthermore, we present theoretical analysis of the ABER for the RIS-assisted multiple-input-single-output (MISO) system. Simulation results demonstrate the effectiveness of the proposed scheme and validate its superiority compared to existing schemes.\par
The rest of this paper is organized as follows. In Section~\Rmnum{2}, we formulate the system model and give the details of the proposed scheme. Section \Rmnum{3} derives the analytical performance of the proposed scheme. Simulation results and conclusion are given in Sections \Rmnum{4} and \Rmnum{5}, respectively.

\section{System Model}
We consider an RIS-aided downlink communication system, where a single-antenna user is served by an $N_t$-antenna base station (BS). Due to the existence of obstacles, no direct links between the BS and the user and an RIS with $N$ reflecting elements is deployed to help bridge the communication. Without loss of generality, let us group the reflecting elements uniformly into $L$ sub-surfaces to reduce the complexity of system design\cite{ref16}.  Each sub-surface contains $N/L$ elements, where $N/L$ is assumed as an integer for simplicity. Let $\bm{G}\in \mathbb{C}^{N\times N_t}$ and $\bm{h}_r \in \mathbb{C}^{N\times 1}$ denote the channel between BS and RIS and the channel between RIS and the user, respectively. The RIS is usually deployed where exists a strong line-of-sight (LoS) link to the BS. Accordingly, the channel between BS and RIS can be modelled as
\begin{equation}
\bm{G}= \beta \bm{{\rm{a}}}_N \left( \phi_r\right) \bm{{\rm{a}}}_{N_t}^H \left( \phi_t\right),
\end{equation}
where $\beta$ represents the complex gain of the path, and $\phi_r$ and $\phi_t$ denote the angle of arrival at RIS and the angle of departure at BS, respectively. The steering vectors $\bm{{\rm{a}}}_N \left( \phi_r\right)$ and $\bm{{\rm{a}}}_{N_t} \left( \phi_t\right)$ are respectively defined as
\begin{align}
\bm{{\rm{a}}}_N \left( \phi_r\right)&=\left [1,e^{j\frac{2\pi d}{\lambda}\sin \phi_r },\cdots,e^{j\frac{2\pi d}{\lambda}(N-1)\sin \phi_r }\right]^T,\nonumber \\
\bm{{\rm{a}}}_{N_t} \left( \phi_t\right)&=\left [1,e^{j\frac{2\pi d}{\lambda}\sin \phi_t },\cdots,e^{j\frac{2\pi d}{\lambda}(N_t-1)\sin \phi_t }\right]^T,
\end{align}
where $d$ is the antenna separation distance, $\lambda$ is the wavelength, and $\frac{d}{\lambda}$ is assumed as $\frac{1}{2}$. Moreover, for the channel between RIS and the user, due to rich scattering in the propagation environment, we model it as a Rayleigh channel. The $n$th element of $\bm{h}_r$, denoted by $h_{r,n}=\alpha_n e^{j\varphi_n}$, follows $\mathcal{CN}(0,1)$. The phase shift at the RIS is denoted by $\bm{\Theta}={\rm{diag}}\{e^{j\theta_1},\cdots, e^{j\theta_N}\}$, where $\theta_n$ is the phase of the $n$th reflecting element. Let $\bm{w}$ denote the normalized precoding vector at BS. The signal received at the user when RIS does not convey any information is formulated as
\begin{equation}
y=\sqrt{P} \bm{h}_r^H \bm{\Theta} \bm{G}\bm{w} s+z,
\end{equation}
where $P$ is the transmit power, $s$ is the symbol transmitted by the BS, and $z$ represents the additive Gaussian noise with zero mean and variance $\sigma^2$. The symbol $s$ is selected from $M$-ary phase shift keying/quadrature amplitude modulation (PSK/QAM) constellations, satisfying $\mathbb{E}\{\vert s\vert^2\}=1$. Note that RIS only reflects signals passively and it neither amplifies nor introduces noise \cite{pan}.

\par

To achieve higher communication rate, we propose the superimposed RIS-phase modulation, referred to as SRPM, where the RIS not only performs passive beamforming, but also implicitly conveys extra messages. Specifically, in the SRPM scheme, the phase configured at the RIS is made up of two parts, i.e., the base phases and the phase offsets. The base phase, which are used for enhancing the SNR at the receiver and avoiding high link outage probability, are selected by conventional optimization algorithms for the RIS passive beamforming \cite{ref3,ref4,ref5}. On the other hand, the dynamic phase offsets superimposed to the preciously optimized RIS phases are selected from a specific set, thereby enabling extra information transfer. Concretely, the phase on the $n$th reflecting element is configured as
\begin{equation}
\tilde{\theta}_n=\theta_n^* + k_n \Delta \theta,
\end{equation}
where $\theta_n^*$ is the optimized base phase, $k_n$ is selected from the set $\mathcal{K}=\{-K,-K+1,\cdots,0,1,\cdots,K\}$, and $\Delta \theta$ is a predetermined step unit of the phase offsets. Defining $K$ as the modulation order for conveying extra information at the RIS, the maximum phase offset is $K\Delta \theta$. Using the superimposed phase in (4), the received signal in (3) becomes
\begin{align}
y&=\sqrt{P} \bm{h}_r^H {\rm{diag}}\{e^{j\tilde{\theta}_1},\cdots, e^{j\tilde{\theta}_N}\} \bm{G}\bm{w}^* s+z\nonumber \\
&= \sqrt{P} \bm{h}_r^H \bm{\Xi}\bm{\Theta}^* \bm{G}\bm{w}^* s+z,
\end{align}
where $\bm{\Xi}\triangleq {\rm{diag}}\{e^{j k_1 \Delta \theta},\cdots, e^{j k_n \Delta \theta}\}$ conveys the additional information. By applying the optimized phase shifts derived in \cite{ref17}, we have
\begin{align}
\theta_n^* &=-\angle \left( [{\rm{diag}}(\bm{h}_r^H)\bm{{\rm{a}}}_N \left( \phi_r\right)]_n\right), \enspace n=1,2,\cdots,N,\nonumber \\
\bm{w}^*&=\frac{\left(\bm{h}_r^H \bm{\Theta}^* \bm{G}\right)^H}{\left \Vert \bm{h}_r^H\bm{\Theta}^* \bm{G} \right \Vert},
\end{align}
where $[\cdot]_n$ denotes the $n$th element of the vector, and $\bm{w}^*$ is determined by the maximum ratio transmission (MRT) principle\footnote{For general MISO systems, we can exploit the alternating optimization procedure to obtain the suboptimal phase shifts and precoding vector \cite{ref3}.}. Substituting $\bm{\Theta}^*$ and $\bm{w}^*$ into (5), we obtain 
\begin{align}
y&=\sqrt{P N_t}\beta \left( \sum_{n=1}^N x_n \alpha_n \right)+z,
\end{align}
where we define $x_n\triangleq e^{j k_n\Delta \theta} s$ containing all the information to transmit. At the receiver, we consider the maximum likelihood (ML) principle to detect $x_n$, which is expressed as
\begin{equation}
(\hat{s},\hat{\bm{v}})=\arg \min_{s,\bm{v}} \left \vert y-\sqrt{P N_t}\beta \left( \sum_{n=1}^N x_n \alpha_n \right)\right \vert^2,
\end{equation}
where $\bm{v}=\left [e^{j k_1 \Delta \theta},\cdots, e^{j k_n \Delta \theta} \right]^T$.\par
To reduce complexity, we consider that the reflecting elements in the same sub-surface are configured with the same phase offset. Then, the received signal in (7) is rewritten as
\begin{equation}
y=\sqrt{P N_t}\beta \left( \sum_{l=1}^L x_l\sum_{n\in\mathcal{A}_l} \alpha_n \right)+z=\sqrt{P N_t}\beta \bm{h}^T \bm{x}+z,
\end{equation}
where $\mathcal{A}_l$ represents the set of antenna elements in the $l$th sub-surface, $\bm{h}\triangleq \left[\sum_{n\in\mathcal{A}_1} \alpha_n,\cdots,\sum_{n\in\mathcal{A}_L} \alpha_n \right]^T$, and $\bm{x}\triangleq [x_1,\cdots,x_L]^T$. Now the data rate of the proposed SRPM scheme is calculated as $\log_2 M+L\lfloor\log_2(2K+1)\rfloor$ bits per channel use. Moreover, the ML detection algorithm requires $\mathcal{O}\left(MKL^2 \right)$ real multiplications, which shares the same complexity order-of-magnitude as the RIS-based IM techniques \cite{im3}. It is obvious that by reducing $L$, the complexity of ML detection is significantly reduced and the method is more computationally efficient than typical IM schemes.

\par

It is worth noting that the proposed SRPM scheme is a general framework for modulating extra information upon the phases of RIS and the previous works can be regarded as special cases of the SRPM. In particular, the proposed SRPM with $\mathcal{K}=\{0,1\}$ and $\Delta \theta=\pi/2$ reduces to the RIS-QRM in \cite{ref10}. For the case of discrete phase shifts with $b$ quantization bits, the proposed SRPM with $K=2^{b-1}$ and $K\Delta \theta=\pi$ is consistent with the method developed in \cite{ref6}. 

For the universal framework of SRPM, we note that a few special cases may not be applicable in practice. This is because some coincidental phase offsets can make the received symbol not uniquely decodable even for ideal channels without noise. For example, when $s=e^{j\frac{\pi}{4}}$ and $v_l=e^{-j\frac{\pi}{4}}$ for all $l$, we cannot distinguish it from another symbol $s=e^{-j\frac{\pi}{4}}$ and $v_l=e^{j\frac{\pi}{4}}$.

\section{Performance Analysis}
In this section, we analyze the ABER of the proposed SRPM scheme with the ML detection. The optimal selection of modulation parameters, $K$ and $\Delta\theta$, is presented with the diversity order of SRPM.\par
Firstly, we calculate the conditional pairwise error probability (CPEP) of detecting $(s,\bm{v})$ in (10) as $(\hat{s},\hat{\bm{v}})$. It follows
\begin{align}
{\rm{Pr}}(s,\bm{v} \to \hat{s},\hat{\bm{v}} \vert \bm {h})
&={\rm{Pr}}\left( \left\vert y-\sqrt{P N_t}\beta\bm{h}^T \bm{x}\right \vert^2 >\left\vert y-\sqrt{P N_t}\beta\bm{h}^T \hat{\bm{x}}\right \vert^2 \right)\nonumber \\
&={\rm{Pr}}\left( \left\vert z \right \vert^2 >\left\vert z+\sqrt{P N_t}\beta\bm{h}^T(\bm{x}- \hat{\bm{x}})\right \vert^2 \right)\nonumber  \\
&\overset{(a)}{=}\mathcal{Q} \left(\sqrt{\frac{P N_t\beta^2 \vert \delta\vert^2}{2\sigma^2}} \right)=\mathcal{Q} \left(\sqrt{\frac{P N_t \beta^2\lambda}{2\sigma^2}} \right),
\end{align}
where $\hat{\bm{x}}=\hat{s}\hat{\bm{v}}$, $\delta \triangleq \bm{h}^T(\bm{x}-\hat{\bm{x}})$, and $\lambda \triangleq \vert \delta \vert^2$. The equality in $(a)$ exploits the fact that CPEP is equivalent to ${\rm{Pr}}\left( -P N_t\beta^2 \vert \delta \vert^2 -2{\rm{Re}}\left\{\sqrt{P N_t} \beta \delta^* z \right\}>0\right)$ and $ -P N_t\beta^2 \vert \delta \vert^2 -2{\rm{Re}}\left\{\sqrt{P N_t} \beta \delta^* z \right\}$ is a Gaussian random variable with mean $ -P N_t\beta^2\vert \delta \vert^2$ and variance $2 P N_t \beta^2 \sigma^2 \vert \delta \vert^2$. Then, utilizing the definition of $\mathcal{Q}$ function in \cite[Eq. (2)]{ref19}, we arrive at this equality.

Then, the average pairwise error probability (APEP) is obtained as
\begin{align}
{\rm{Pr}}(s,\bm{v} \to \hat{s},\hat{\bm{v}})=\int_{0}^{+\infty} \mathcal{Q} \left(\sqrt{\frac{P N_t \beta^2 \lambda}{2\sigma^2}} \right) f_{\lambda} (\lambda) \mathrm{d}\lambda,
\end{align}
where $f_\lambda (\cdot)$ is the probability distribution function (PDF) of~$\lambda$. In order to get an insightful expression of the APEP in (11), we need to obtain the distribution of $\lambda$. To proceed, we resort to finding the moment-generating function (MGF) of $\lambda$ in \emph{Lemma~1}.

\begin{lemma}
The MGF of $\lambda$ can be expressed separately for the following two cases.
\begin{itemize}
\item $\det(\bm{C})\neq 0$: 
\begin{align}
&\mathcal{M}_\lambda(t)=\left(\det(\bm{I}-2t\bm{C})\right)^{-\frac{1}{2}} {\rm{exp}} \left(-\frac{1}{2}\bm{m}^T\left[\bm{I}-(\bm{I}-2t\bm{C})^{-1} \right] \bm{C}^{-1} \bm{m}\right),
\end{align}
\item $\det(\bm{C})= 0$:
\begin{align}
\mathcal{M}_\lambda(t)=(1-2c\sigma_x^2t)^{-\frac{1}{2}}\exp \left( \frac{c\mu_x^2 t }{1-2c\sigma_x^2t}\right),
\end{align}
\end{itemize}
where $\det(\cdot)$ is the determinant of the matrix, and $\bm{C}$, $\bm{m}$, $c$, $\mu_x$, and $\sigma_x$ are defined in Appendix A, and the MGFs are obtained for asymptotically large $N/L$.
\end{lemma}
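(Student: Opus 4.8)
The plan is to identify $\lambda=\vert\delta\vert^2$ as the squared norm of a low-dimensional Gaussian vector and then invoke the classical MGF of a Gaussian quadratic form. I would start by fixing the law of $\bm{h}$. Each entry $[\bm{h}]_l=\sum_{n\in\mathcal{A}_l}\alpha_n$ is a sum of $N/L$ i.i.d.\ Rayleigh magnitudes arising from the $\mathcal{CN}(0,1)$ channel, and since the sub-surfaces $\mathcal{A}_l$ are disjoint the $L$ entries are mutually independent. Applying the central limit theorem for asymptotically large $N/L$, each entry becomes approximately Gaussian, so $\bm{h}\sim\mathcal{N}(\mu_x\bm{1},\sigma_x^2\bm{I})$, where $\mu_x$ and $\sigma_x^2$ are the (scaled) mean and variance of a single Rayleigh term fixed in Appendix~A. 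This is exactly where the ``asymptotically large $N/L$'' qualifier in the statement is consumed, and I expect this CLT justification to be the only genuinely non-algebraic step of the argument.

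Next I would linearize $\delta$. Writing $\bm{e}\triangleq\bm{x}-\hat{\bm{x}}=\bm{e}^R+j\bm{e}^I$, the real and imaginary parts of $\delta=\bm{h}^T\bm{e}$ are $\bm{h}^T\bm{e}^R$ and $\bm{h}^T\bm{e}^I$, so with $\bm{E}\triangleq[\bm{e}^R,\bm{e}^I]\in\mathbb{R}^{L\times2}$ the vector $\bm{u}\triangleq\bm{E}^T\bm{h}\in\mathbb{R}^2$ obeys $\lambda=\|\bm{u}\|^2=\bm{u}^T\bm{u}$. Being an affine image of the Gaussian $\bm{h}$, $\bm{u}$ is itself Gaussian, $\bm{u}\sim\mathcal{N}(\bm{m},\bm{C})$ with $\bm{m}=\mu_x\bm{E}^T\bm{1}$ and $\bm{C}=\sigma_x^2\bm{E}^T\bm{E}$, precisely the quantities defined in Appendix~A. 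The key structural point is that $\bm{E}$ has only two columns, so $\bm{C}$ is a $2\times2$ matrix and $\lambda$ is the squared length of a planar Gaussian; all dependence on the error pattern is compressed into this $2\times2$ object, and whether $\det(\bm{C})$ vanishes is exactly whether $\bm{e}^R$ and $\bm{e}^I$ are linearly dependent.

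I would then evaluate $\mathcal{M}_\lambda(t)=\mathbb{E}[e^{t\bm{u}^T\bm{u}}]$ by inserting the $\mathcal{N}(\bm{m},\bm{C})$ density and completing the square in the exponent. When $\det(\bm{C})\neq0$ the kernel $\bm{C}^{-1}-2t\bm{I}$ is invertible, the Gaussian integral produces the prefactor $\det(\bm{I}-2t\bm{C})^{-1/2}$, and the leftover quadratic in $\bm{m}$ contracts to the stated exponent once one uses that $\bm{C}^{-1}$ commutes with the resolvent $(\bm{I}-2t\bm{C})^{-1}$ (both being functions of $\bm{C}$); this yields the first case. When $\det(\bm{C})=0$ the two columns of $\bm{E}$ are proportional to a common direction, $\bm{C}$ has rank one, and $\lambda$ collapses to $c\,w^2$ for a single scalar Gaussian $w$, whose mean and variance together with the scale $c$ are the quantities $\mu_x$, $\sigma_x$, $c$ of Appendix~A; the MGF then reduces to the one-dimensional non-central chi-square form, giving the second case. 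The main thing to watch is the bookkeeping of this degenerate case---correctly extracting the surviving direction together with its mean and variance so that the $2\times2$ expression contracts to the scalar one---rather than any deep obstacle.
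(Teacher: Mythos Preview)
Your proposal is correct and follows essentially the same route as the paper: apply the CLT to make the independent sub-surface sums $h_l$ Gaussian, write the real and imaginary parts of $\delta$ as a two-dimensional Gaussian vector with mean $\bm{m}$ and covariance $\bm{C}$, and then read off the MGF of $\lambda=\|\bm{u}\|^2$ from the standard Gaussian quadratic-form formula, splitting into the full-rank and rank-one cases. One small notational caution: in your write-up you use $\mu_x,\sigma_x$ both for the per-entry mean and variance of $\bm{h}$ and for the quantities appearing in the degenerate-case MGF, whereas in the paper the former are $\mu_h,\sigma_h^2$ and the latter $\mu_x,\sigma_x^2$ denote the mean and variance of the surviving scalar $\delta_x$; keep these distinct when you fill in the details.
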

\begin{proof}
Please refer to Appendix A. $\hfill\square$
\end{proof}

Applying the derived MGF of $\lambda$, we can calculate the APEP according to the Gil-Pelaez's inversion formula in \cite{ref18}. However, the complex calculations also require the aid of numerical calculation tools, which hardly gives useful insights. To circumvent this difficulty, we further exploit a useful approximation of the $\mathcal{Q}$ function by $\mathcal{Q}(x)\approx\frac{1}{12}e^{-\frac{x^2}{2}}+\frac{1}{4}e^{-\frac{2x^2}{3}}$ \cite[Eq. (14)]{ref19}. Then, the APEP in (11) becomes
\begin{align}
\Pr(s,\bm{v} \to \hat{s},\hat{\bm{v}})
&\approx \int_{0}^{+\infty} \left(\frac{1}{12}e^{-\frac{P N_t\beta^2\lambda}{4\sigma^2}}+\frac{1}{4}e^{-\frac{P N_t\beta^2\lambda}{3\sigma^2}} \right) f_{\lambda} (\lambda) \mathrm{d}\lambda \nonumber \\
&=\frac{1}{12}\mathcal{M}_\lambda \left(-\frac{P N_t\beta^2 }{4\sigma^2} \right)+\frac{1}{4}\mathcal{M}_\lambda \left(-\frac{P N_t\beta^2}{3\sigma^2} \right).
\end{align}

Now that, according to this derived APEP, we obtain a union bound of the ABER as follows
\begin{align}
P_b\leq& \frac{1}{M (2K+1)^L}\sum_{s} \sum_{\bm{v}} \sum_{\hat{s}} \sum_{\hat{\bm{v}}}{\rm{Pr}}(s,\bm{v} \to \hat{s},\hat{\bm{v}}) \frac{e(s,\bm{v} \to \hat{s},\hat{\bm{v}})}{\log_2(M)+L\lfloor\log_2(2K+1)\rfloor},
\end{align}
where $e(s,\bm{v} \to \hat{s},\hat{\bm{v}})$ represents the number of bits in error for the corresponding pairwise error event. For each misestimated $\hat{s}$, the number of error bits is upto $\log_2M$, and for each misestimated $\hat{v}_l$, the number of error bits is upto $\lfloor\log_2(2K+1)\rfloor$. Thus, $e(s,\bm{v} \to \hat{s},\hat{\bm{v}})$ is calculated by accumulating all these error bits. In addition, we analyze the diversity order of the proposed SRPM scheme in \emph{Theorem 1}.\par

\begin{theorem}
The diversity order of the proposed SRPM is equal to $\frac{1}{2}$ for all available parameter settings of $K$ and $\Delta \theta$.
\end{theorem}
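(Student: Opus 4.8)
The plan is to read the diversity order straight off the union bound in (15). Since $P_b$ is dominated by a finite sum of average pairwise error probabilities (APEPs), and each APEP is approximated in (14) as $\tfrac{1}{12}\mathcal{M}_\lambda\!\left(-\tfrac{PN_t\beta^2}{4\sigma^2}\right)+\tfrac{1}{4}\mathcal{M}_\lambda\!\left(-\tfrac{PN_t\beta^2}{3\sigma^2}\right)$, the high-SNR exponent of $P_b$ is set by the slowest-decaying APEP. Writing $\rho\triangleq PN_t\beta^2/\sigma^2$ and using $d\triangleq-\lim_{\rho\to\infty}\log P_b/\log\rho$, I would reduce the claim to two statements: (i) every pairwise error event yields an APEP that decays no slower than $\rho^{-1/2}$, and (ii) at least one event decays exactly as $\rho^{-1/2}$. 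Because both arguments of $\mathcal{M}_\lambda$ in (14) are negative multiples of $\rho$, this amounts to extracting the leading power of $\mathcal{M}_\lambda(t)$ as $t\to-\infty$.

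Next I would perform that asymptotic extraction for the two cases of \emph{Lemma 1}. In the nondegenerate case (12), as $t\to-\infty$ the prefactor obeys $\det(\bm{I}-2t\bm{C})=\prod_i\left(1-2t\lambda_i(\bm{C})\right)\sim(-2t)^{r}\prod_i\lambda_i(\bm{C})$, where $r={\rm{rank}}(\bm{C})$, so it scales as $(-t)^{-r/2}$; simultaneously $(\bm{I}-2t\bm{C})^{-1}\to\bm{0}$, the bracket in the exponent tends to $\bm{I}$, and the exponential converges to the positive constant $\exp\!\left(-\tfrac12\bm{m}^T\bm{C}^{-1}\bm{m}\right)$. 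Hence $\mathcal{M}_\lambda(t)\sim\text{const}\cdot(-t)^{-r/2}$ and the event has diversity $r/2$. In the degenerate case (13) the same steps give $(1-2c\sigma_x^2t)^{-1/2}\sim\text{const}\cdot(-t)^{-1/2}$ while the exponent $\tfrac{c\mu_x^2t}{1-2c\sigma_x^2t}\to-\tfrac{\mu_x^2}{2\sigma_x^2}$ is constant, so $\mathcal{M}_\lambda(t)\sim\text{const}\cdot(-t)^{-1/2}$ and the diversity is exactly $\tfrac12$. The upshot is that each event's diversity equals half the rank of its $\bm{C}$, so the whole problem collapses to deciding whether $\det(\bm{C})$ vanishes.

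To decide that, I would pin down $\bm{C}$ itself. From the definitions accompanying (10), $\lambda=|\delta|^2$ with $\delta=\bm{h}^T\bm{d}$ and $\bm{d}\triangleq\bm{x}-\hat{\bm{x}}$, and for asymptotically large $N/L$ each $h_l=\sum_{n\in\mathcal{A}_l}\alpha_n$ is a \emph{real} Gaussian by the central limit theorem. Consequently $\left({\rm{Re}}\,\delta,\,{\rm{Im}}\,\delta\right)=\left(\bm{h}^T{\rm{Re}}(\bm{d}),\,\bm{h}^T{\rm{Im}}(\bm{d})\right)$ is a two-dimensional Gaussian vector and $\lambda$ is its squared norm; the matrix $\bm{C}$ of \emph{Lemma 1} is exactly the $2\times2$ covariance of this vector constructed in Appendix~A, with $\bm{m}$ its mean. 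Its rank is $2$ in general but collapses to $1$ — so that $\det(\bm{C})=0$ — precisely when ${\rm{Re}}(\bm{d})$ and ${\rm{Im}}(\bm{d})$ are parallel. Hence nondegenerate events carry diversity $1$ and degenerate events carry diversity $\tfrac12$. To establish (ii) for every admissible $(K,\Delta\theta)$, I would exhibit a degenerate event: alter a single sub-surface, taking $\hat{s}=s$ and $\hat{v}_{l_0}\neq v_{l_0}$ with $\hat{v}_l=v_l$ otherwise, which is possible whenever $K\geq1$. Then $\bm{d}=s\left(v_{l_0}-\hat{v}_{l_0}\right)\bm{e}_{l_0}$ is a complex scalar times a real basis vector, so ${\rm{Re}}(\bm{d})$ and ${\rm{Im}}(\bm{d})$ are trivially parallel and $\det(\bm{C})=0$; when $K=0$ the symbol-only error with $\bm{v}=\bm{1}$ plays the same role. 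Such an event always contributes a $\rho^{-1/2}$ term, and since $\bm{d}\neq\bm{0}$ forces ${\rm{rank}}(\bm{C})\geq1$, no event decays slower; the union bound therefore scales as $\rho^{-1/2}$ and the diversity order is exactly $\tfrac12$, independent of $K$ and $\Delta\theta$.

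The main obstacle is the uniform control of the exponential factor in (12) as $t\to-\infty$: I must confirm it tends to a strictly positive, finite constant rather than vanishing or diverging, since either would corrupt the leading power and falsify the rank$/2$ reading. This hinges on $\bm{C}^{-1}$ acting on the range of the quadratic form and on $\bm{m}$ lying in that range, which relies on the precise whitening construction of $\bm{C}$ and $\bm{m}$ in Appendix~A. A secondary, more routine subtlety is verifying that admissible parameter settings never eliminate all single-sub-surface (or symbol-only) error events, so that a degenerate, diversity-$\tfrac12$ event is guaranteed to survive for every $K$ and $\Delta\theta$.
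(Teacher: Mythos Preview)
Your proposal is correct and follows essentially the same route as the paper: read off the high-SNR decay of each APEP from the MGF expressions in (12) and (13), conclude that nondegenerate events give diversity $1$ while degenerate events give $\tfrac12$, and then exhibit a degenerate pairwise error event that survives for all admissible $(K,\Delta\theta)$. The only substantive difference is the choice of witness: the paper uses the ``all sub-surfaces flip sign'' event $v_l=e^{j\Delta\theta}$, $\hat{v}_l=e^{-j\Delta\theta}$ for every $l$ (with $\hat{s}=s$), which makes $\bm{d}$ proportional to the all-ones vector and hence forces $\det(\bm{C})=0$, whereas you alter a single sub-surface so that $\bm{d}$ is supported on one coordinate. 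Both are valid; your version has the minor advantage of also covering $L=1$ without a separate remark, and your explicit verification that the exponential factor in (12) tends to a finite nonzero constant is a point the paper leaves implicit.
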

\begin{proof}
Please refer to Appendix B. $\hfill\square$
\end{proof}

\begin{remark}
Compared with typical RIS reflection schemes in \cite{ref9,ref10}, the proposed SRPM achieves the same diversity order. Moreover, note that no matter how the modulation parameters $K$ and $\Delta\theta$ are selected, the diversity order remains unchanged. However, the choices of $K$ and $\Delta\theta$ do have an impact on the ABER performance. In other words, we can optimize the choices of $K$ and $\Delta\theta$ for different requirements. Especially for discrete phase shifts, an exhaustive search with low complexity is able to get the optimal choice. 
\end{remark}

\begin{corollary}
By direct inspection of (14), it is found that doubling transmit antennas, $N_t$, is essentially equivalent to doubling the transmit power from the perspective of improving the ABER performance, which results in a theoretical gain of 3$\enspace$dB. Furthermore, by substituting (13) into (14) and combining the definition of $\sigma_x^2$ in Appendix A, it reveals that doubling RIS reflecting elements is also equivalent to doubling the transmit power. Therefore in the RIS deployment, the required power consumption can be reduced by deploying more low-cost reflecting elements.
\end{corollary}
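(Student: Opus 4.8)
The plan is to handle the two assertions separately, since the transmit-antenna claim is an exact algebraic identity whereas the RIS-element claim is a leading-order statement that additionally draws on the $N$-dependence of $\sigma_x^2$ from Appendix~A. In both cases I would first observe that the ABER union bound is a nonnegative weighted sum of the APEPs in (14), and that its weights (the normalizing count $1/(M(2K+1)^L)$ and the per-event error-bit ratios) do not depend on $P$, $N_t$, or $N$ when $L$ is held fixed. It therefore suffices to determine how a single APEP responds to each doubling; the same response is inherited term-by-term by $P_b$.

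For the transmit-antenna claim I would argue purely by inspection of (14). The MGF is evaluated only at the arguments $-\tfrac{PN_t\beta^2}{4\sigma^2}$ and $-\tfrac{PN_t\beta^2}{3\sigma^2}$, and neither (12) nor (13) depends on $P$ or $N_t$ except through this argument $t$: the quantities $\bm{C},\bm{m},c,\mu_x,\sigma_x$ are statistics of the channel and of the error vector $\bm{x}-\hat{\bm{x}}$ alone. Hence each APEP, and therefore $P_b$, is a function of the single product $PN_t$. Replacing $N_t$ by $2N_t$ gives exactly the same value as replacing $P$ by $2P$, i.e. a $3$~dB power increase, which is the asserted array gain.

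For the RIS-element claim I would substitute (13) into (14) and extract the dominant factor. Writing $t=-a$ with $a=\tfrac{PN_t\beta^2}{4\sigma^2}$, the degenerate-case MGF reads $(1+2c\sigma_x^2 a)^{-1/2}\exp\!\left(\tfrac{-c\mu_x^2 a}{1+2c\sigma_x^2 a}\right)$; as $a$ grows the exponential saturates to the constant $\exp(-\mu_x^2/2\sigma_x^2)$, so the APEP is controlled by the prefactor $(1+2c\sigma_x^2 a)^{-1/2}$, which depends on $P$ and $\sigma_x^2$ only through the product $P\sigma_x^2$. I would then invoke the definition of $\sigma_x^2$ in Appendix~A: each sub-surface coefficient $\sum_{n\in\mathcal{A}_l}\alpha_n$ is a sum of $N/L$ i.i.d. magnitudes, so its variance, and hence $\sigma_x^2$, is proportional to $N/L$. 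With $L$ fixed, doubling $N$ doubles $\sigma_x^2$ and thus enters the dominant prefactor exactly as doubling $P$ would, establishing the claimed equivalence.

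The step I expect to be delicate is making the RIS-element equivalence precise rather than merely formal. Since $\mu_x^2/\sigma_x^2$ also scales with $N/L$, doubling $N$ alters the saturated exponential $\exp(-\mu_x^2/2\sigma_x^2)$ while doubling $P$ leaves it unchanged; the equivalence is therefore exact only at the level of the leading $a^{-1/2}$ factor, the very factor that fixes both the $3$~dB shift and the order-$\tfrac{1}{2}$ diversity of Theorem~1, and in fact the exponential correction makes adding elements marginally more favorable than adding power. I would state this explicitly, argue that the exponential correction is subdominant in $a$, and read ``equivalent to doubling the transmit power'' in the leading-order sense consistent with the diversity analysis. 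The only routine checks are confirming $\sigma_x^2\propto N/L$ against the Appendix-A definition and verifying that $c$ hides no further $N$-dependence.
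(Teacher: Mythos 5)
Your proposal is correct and takes essentially the same route as the paper: the paper likewise establishes the $N_t$ claim by observing that (14) depends on $P$ and $N_t$ only through the product $PN_t$ in the MGF argument, and the $N$ claim by substituting (13) into (14) and invoking $\sigma_x^2\propto\sigma_h^2=\frac{(4-\pi)N}{4L}$ from Appendix~A. Your added caveat that the equivalence for doubling $N$ holds exactly only in the dominant $\left(1+2c\sigma_x^2 a\right)^{-\frac{1}{2}}$ factor, since the saturating exponential $\exp\left(-\frac{\mu_x^2}{2\sigma_x^2}\right)$ also scales with $N/L$ and slightly favors adding elements over adding power, is a correct refinement that the paper leaves implicit.
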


\begin{corollary}
When there is only one sub-surface at RIS, i.e., $L=1$, the MGF of $\lambda$ in (12) and (13) reduces to
\begin{align}
\mathcal{M}_{\lambda}(t)=\left(1-\frac{(4-\pi)\lambda Nt }{2}\right)^{-\frac{1}{2}}\mathrm{exp}\left(\frac{\lambda \pi N^2t}{4-(8-2\pi)\lambda Nt  } \right).
\end{align}
Then, at low SNR regime, the MGF in (16) is dominated by the exponential term, and thus the ABER declines exponentially with SNR. Moreover, a much lower ABER can be achieved by increasing $N$ due to the exponent $N^2$ in this exponential~form.

\end{corollary}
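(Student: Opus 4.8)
The plan is to read this corollary as a specialization of \emph{Lemma~1} to a single sub-surface, followed by an asymptotic inspection of the resulting expression. Setting $L=1$, the channel vector $\bm{h}$ collapses to the scalar $h=\sum_{n=1}^{N}\alpha_n$, the symbol vector $\bm{x}$ collapses to the scalar $x=e^{jk_1\Delta\theta}s$, and therefore $\delta=h(x-\hat{x})$ with $\lambda=|\delta|^2=|x-\hat{x}|^2\,h^2$. The key structural observation is that, since every $\alpha_n\ge 0$, the coefficient $h$ is \emph{real}; writing $x-\hat{x}=a+jb$ gives $(\mathrm{Re}\,\delta,\mathrm{Im}\,\delta)=h\,(a,b)$, which lives in a one-dimensional subspace. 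Hence the $2\times 2$ real covariance matrix $\bm{C}$ of $(\mathrm{Re}\,\delta,\mathrm{Im}\,\delta)$ is rank one, so $\det(\bm{C})=0$ and the applicable branch of \emph{Lemma~1} is the degenerate one, namely (13).

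I would then specialize the constants in (13). The factor $c$ reduces to $|x-\hat{x}|^2$ (the quantity written as $\lambda$ in the coefficients of (16)), while $\mu_x$ and $\sigma_x^2$ become the mean and variance of $h$. Since the $\alpha_n$ are i.i.d.\ Rayleigh magnitudes of $\mathcal{CN}(0,1)$ entries, with $\mathbb{E}\{\alpha_n\}=\frac{\sqrt{\pi}}{2}$ and $\mathrm{Var}(\alpha_n)=1-\frac{\pi}{4}=\frac{4-\pi}{4}$, the same large-$N/L$ central-limit approximation already invoked to prove \emph{Lemma~1} (here with $N/L=N$) gives $h$ approximately Gaussian with $\mu_x=\frac{N\sqrt{\pi}}{2}$ and $\sigma_x^2=\frac{N(4-\pi)}{4}$, i.e.\ $\mu_x^2=\frac{N^2\pi}{4}$. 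Substituting $c\sigma_x^2=\frac{(4-\pi)|x-\hat{x}|^2 N}{4}$ and $c\mu_x^2=\frac{\pi|x-\hat{x}|^2 N^2}{4}$ into (13) and clearing the common factor of $4$ in the exponent's denominator then yields (16) directly; this last step is pure bookkeeping.

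For the concluding remarks, I would feed (16) into the approximate APEP (14), where $\mathcal{M}_\lambda$ is evaluated at $t=-\frac{PN_t\beta^2}{4\sigma^2}$ and $t=-\frac{PN_t\beta^2}{3\sigma^2}$, so that the argument is negative with magnitude proportional to the SNR. Taking logarithms in (16), the prefactor contributes $-\frac{1}{2}\ln\!\big(1-\frac{(4-\pi)\lambda Nt}{2}\big)$, which is $\mathcal{O}(N|t|)$, whereas the exponential contributes $\frac{\pi\lambda N^2 t}{4-(8-2\pi)\lambda Nt}\to\frac{\pi\lambda N^2 t}{4}$ in the low-SNR limit, which is $\mathcal{O}(N^2 t)$ and negative. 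The exponential thus dominates the prefactor by an extra factor of $N$, so the APEP and hence the union-bound ABER in (15) decay like $\exp(-c'N^2\cdot\mathrm{SNR})$, establishing both the exponential-in-SNR decay and the steep improvement with $N$ carried by the $N^2$ exponent. The main obstacle is not the algebra but the two justification points: confirming that the $L=1$ case truly lands in the singular branch $\det(\bm{C})=0$ and matching $(c,\mu_x,\sigma_x)$ to $(|x-\hat{x}|^2,\mathbb{E}\{h\},\sqrt{\mathrm{Var}(h)})$ rather than to the nonsingular quantities of (12); and making ``dominated by the exponential term'' precise by separating the $\mathcal{O}(N)$ prefactor from the $\mathcal{O}(N^2)$ exponent and tracking the sign of $t$, since only $t<0$ turns the exponential into genuine decay in SNR.
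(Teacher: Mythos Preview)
Your proposal is correct and is precisely the derivation the paper leaves implicit; the corollary carries no separate proof in the paper and is meant to be read as the $L=1$ specialization of \emph{Lemma~1}, which you reconstruct accurately (including the key observation that $h=\sum_n\alpha_n$ is real, forcing $\det(\bm{C})=0$ and landing you in branch~(13) with $c=|x-\hat{x}|^2$, $\mu_x=\tfrac{N\sqrt{\pi}}{2}$, $\sigma_x^2=\tfrac{N(4-\pi)}{4}$). Your asymptotic reading of the exponential versus the algebraic prefactor at small $|t|$ is also the intended justification for the qualitative claims in the second half of the corollary.
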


\section{Simulation Results}
In this section, simulation results are provided to verify the effectiveness of the SRPM. The received SNR is defined as the ratio of transmit power and noise power, i.e., $\frac{P}{\sigma^2}$.
Unless otherwise specified, the parameters are set as: the number of reflecting elements, $N=128$, the number of transmit antennas at the BS, $N_t=8$, the number of sub-surfaces, $L=2$, the modulation order $K=1$, and the step unit of phase offsets, $\Delta\theta=\frac{3\pi}{16}$. QPSK is used at the BS. For simplicity, the path loss is normalized to $\beta=1$.

\begin{figure}[!t]
	\centering
	\begin{minipage}[t]{0.33\linewidth}
		\centering
		\includegraphics[width=1\linewidth]{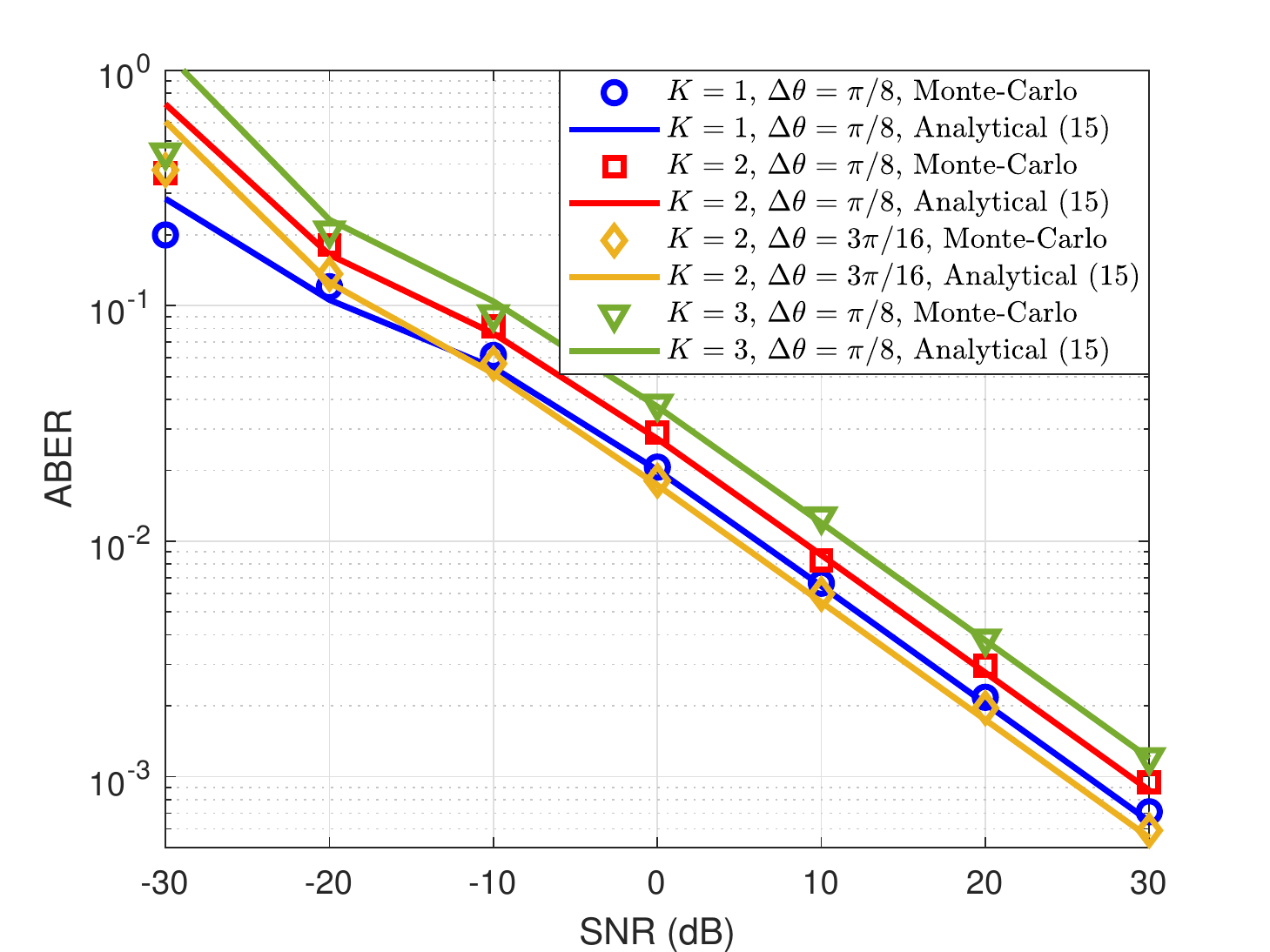}
	\end{minipage}
	\begin{minipage}[t]{0.33\linewidth}
		\centering
		\includegraphics[width=1\linewidth]{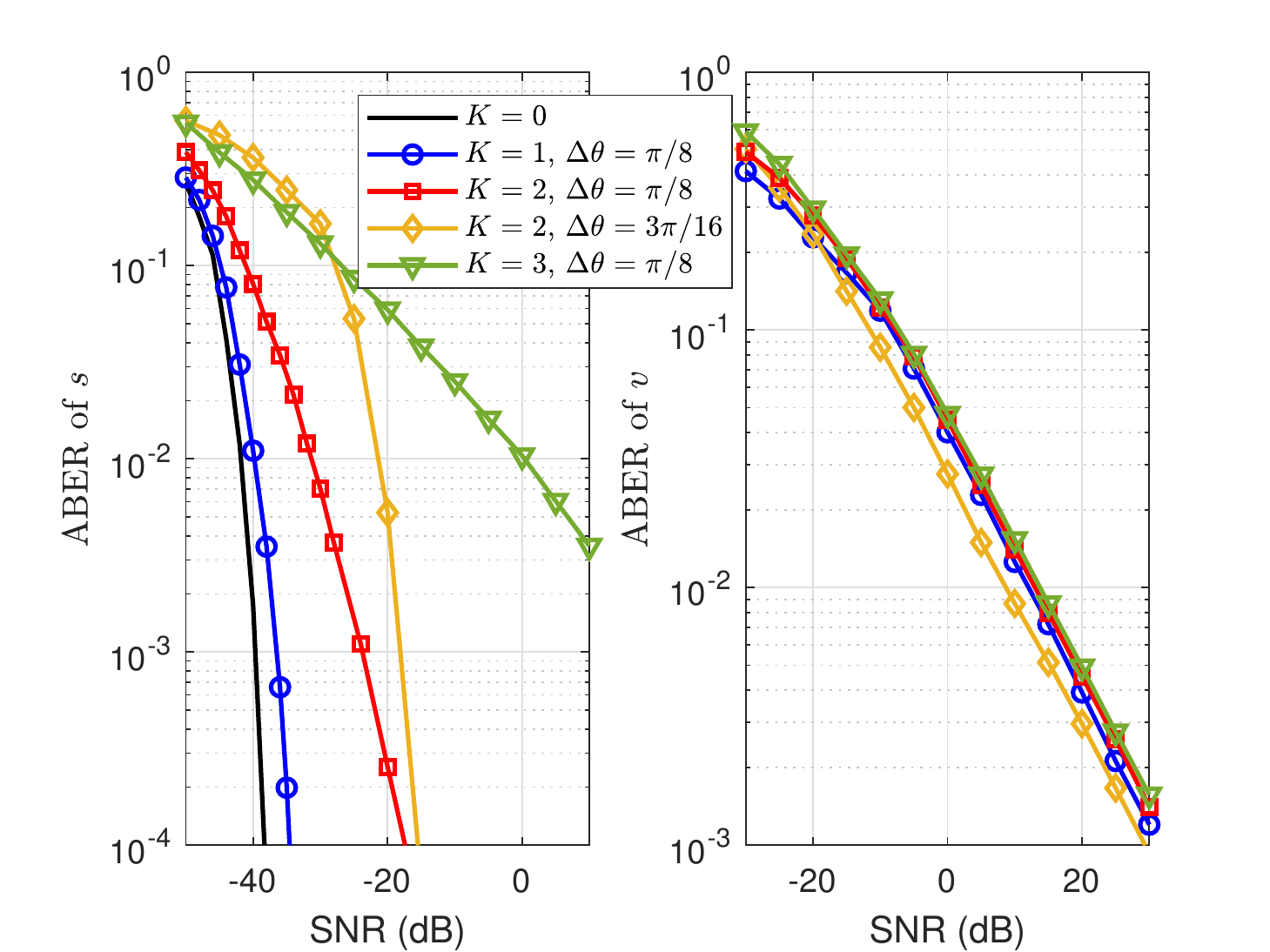}
	\end{minipage}
	
	\caption{ABER of the SRPM scheme under different $K$ and $\Delta\theta$. (a) ABER of $(s,\bm{v})$. (b) ABER of $s$ and ABER of $\bm{v}$.}
\end{figure}

Fig. 1(a)  depicts the numerical ABER obtained from Monte Carlo simulations, compared with theoretical approximation in (15) under various choices of $K$ and $\Delta\theta$. It is seen that under high SNR, the theoretical approximation in (15) is tight for all the tested setups. Moreover, given $\Delta\theta$, the performance of ABER gradually deteriorates as $K$ increases. This is because the larger $K$ results in more constellation symbols and smaller distance between different constellation symbols. In addition, for fixed $K$, we observe a performance gain when $\Delta\theta$ increases from $\frac{\pi}{8}$ to $\frac{3\pi}{16}$. This is because the distance between different constellation symbols grows with the phase offset. However, excessive phase offset causes a severe drop in the SNR at the receiver, which further affects the ABER performance. Hence, for given $K$, it is necessary to optimize $\Delta\theta$ in pursuit of better performance.\par

Fig. 1(b) depicts the ABER curves of messages $s$ and $\bm{v}$ separately to further show the ABER degradation for message $s$.  We find that different selections of $K$ and $\Delta \theta$ mainly affect the ABER of message $s$ and, as expected, it has little impact on  the ABER of message $\bm{v}$. The degradation of the ABER of message $s$ is mainly due to the maximum phase offset, i.e., $K\Delta\theta$, which affects the received power. In other words, the transmission of  extra bits is at the cost of the degradation of the ABER of message $s$.

Regarding the optimal choice of $\Delta \theta^*$, we first list the values in Table I. Note that for the case of discrete phase shifts with $b=4$ quantization bits, we obtain the optimal $\theta^*$ by first optimizing $\theta_n$ for the continuous phase shift case according to (6) and then choose the closest phase  in the discrete phase shift set as $\theta_n^*$. Next, we obtain the optimal $\Delta\theta^*$ with the lowest ABER via exhaustive search. It can be found that the optimal $\Delta\theta^*$ decreases as $K$ increases. That is, with lower modulation orders, larger phase offsets are preferred, and vice versa. For the modulation mode of the symbols transmitted by the BS, it has marginal impacts on the optimal $\Delta\theta^*$. \par

\begin{table*}[!t]
\renewcommand\arraystretch{1.05}
\caption{Optimal Choices of $(K,\Delta\theta)$\label{tab:table1}}
\centering
\begin{tabular}{|c|c|c|c|c|c|c|c|c|}
\hline
\multicolumn{2}{|c|}{Parameters} & $K=1$ &$K=2$ &$K=3$&$K=4$ &$K=5$&$K=6$ &$K=7$\\
\hline
\multirow{2}{*}{ optimal $\Delta\theta^*$} &QPSK &${7\pi}/{16}$ &${3\pi}/{8}$ &${3\pi}/{16}$ &${3\pi}/{16}$ &${3\pi}/{16}$ &${\pi}/{16}$ &${\pi}/{16}$\\
\cline{2-9}
&16QAM&${3\pi}/{8}$ &${3\pi}/{8}$ &${3\pi}/{16}$ &${3\pi}/{16}$ &${3\pi}/{16}$ &${\pi}/{16}$ &${\pi}/{16}$\\
\hline
\end{tabular}
\end{table*}

\begin{figure*}[!t]
	\centering
	\begin{minipage}[t]{0.32\linewidth}
		\centering
		\includegraphics[width=1\linewidth]{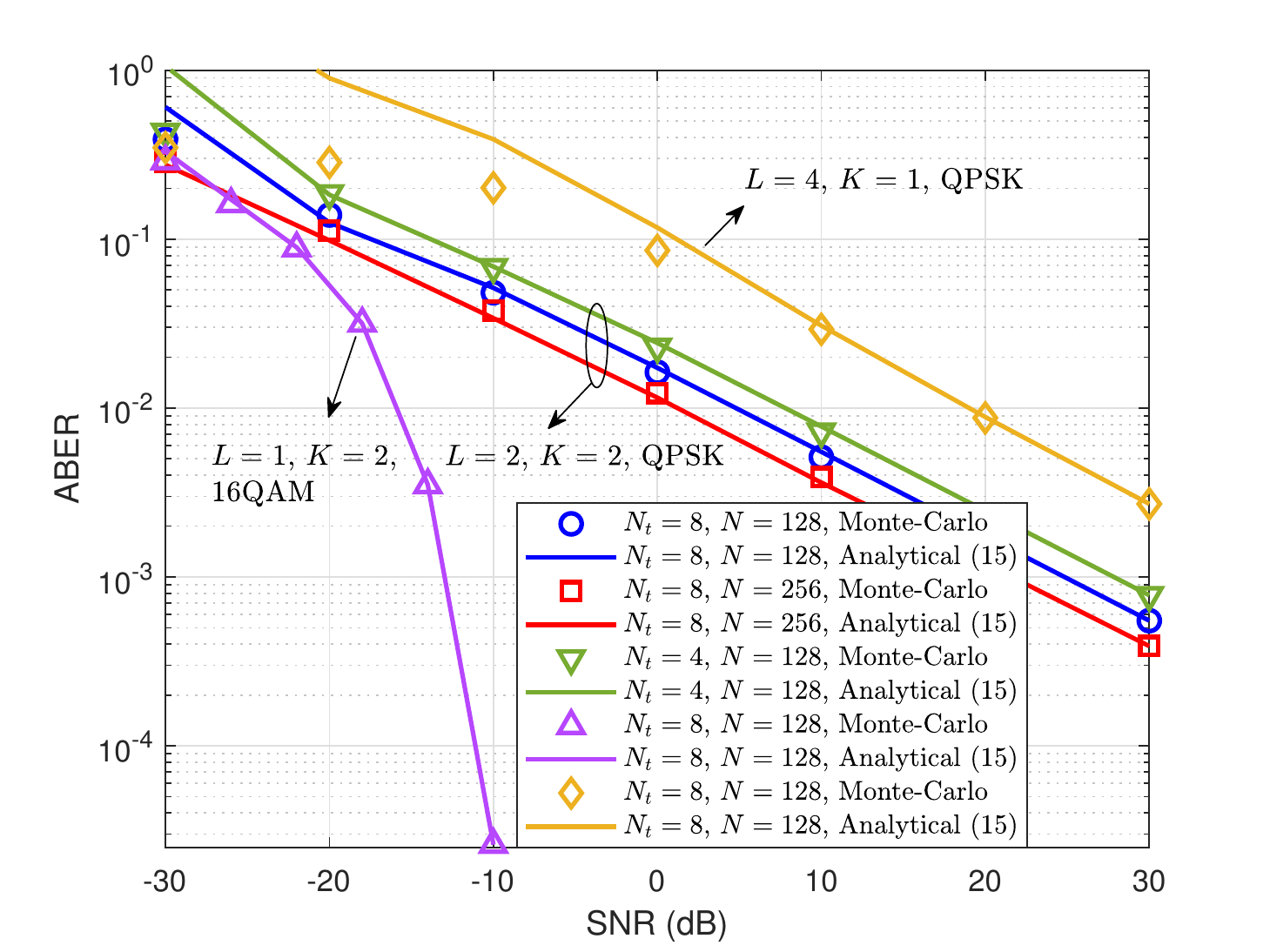}
		\caption{ABER of the SRPM scheme under different system parameters.}
	\end{minipage}
	\begin{minipage}[t]{0.32\linewidth}
		\centering
		\includegraphics[width=1\linewidth]{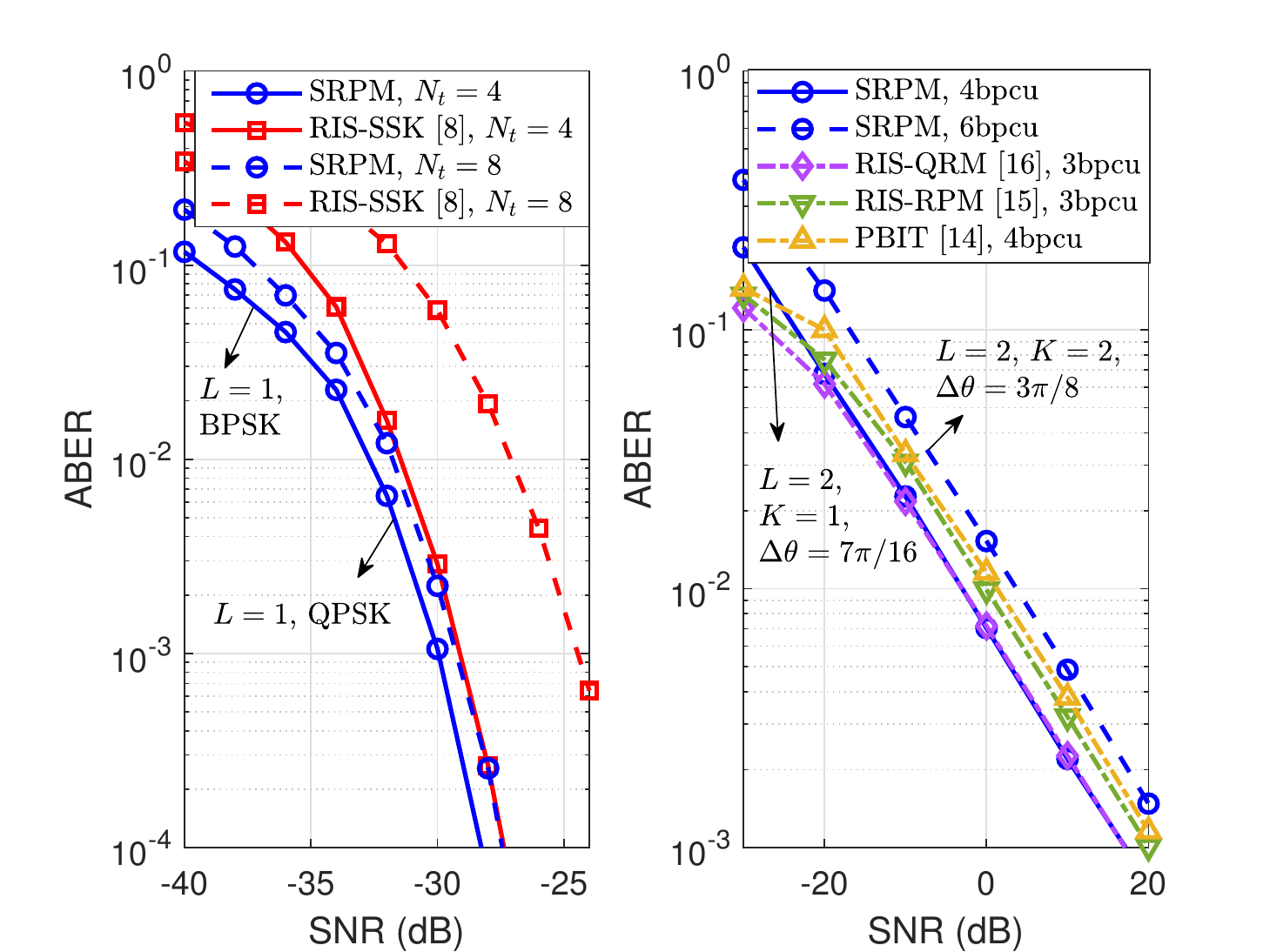}
		\caption{ABER of different schemes.}
	\end{minipage}
	\begin{minipage}[t]{0.32\linewidth}
		\centering
		\includegraphics[width=1\linewidth]{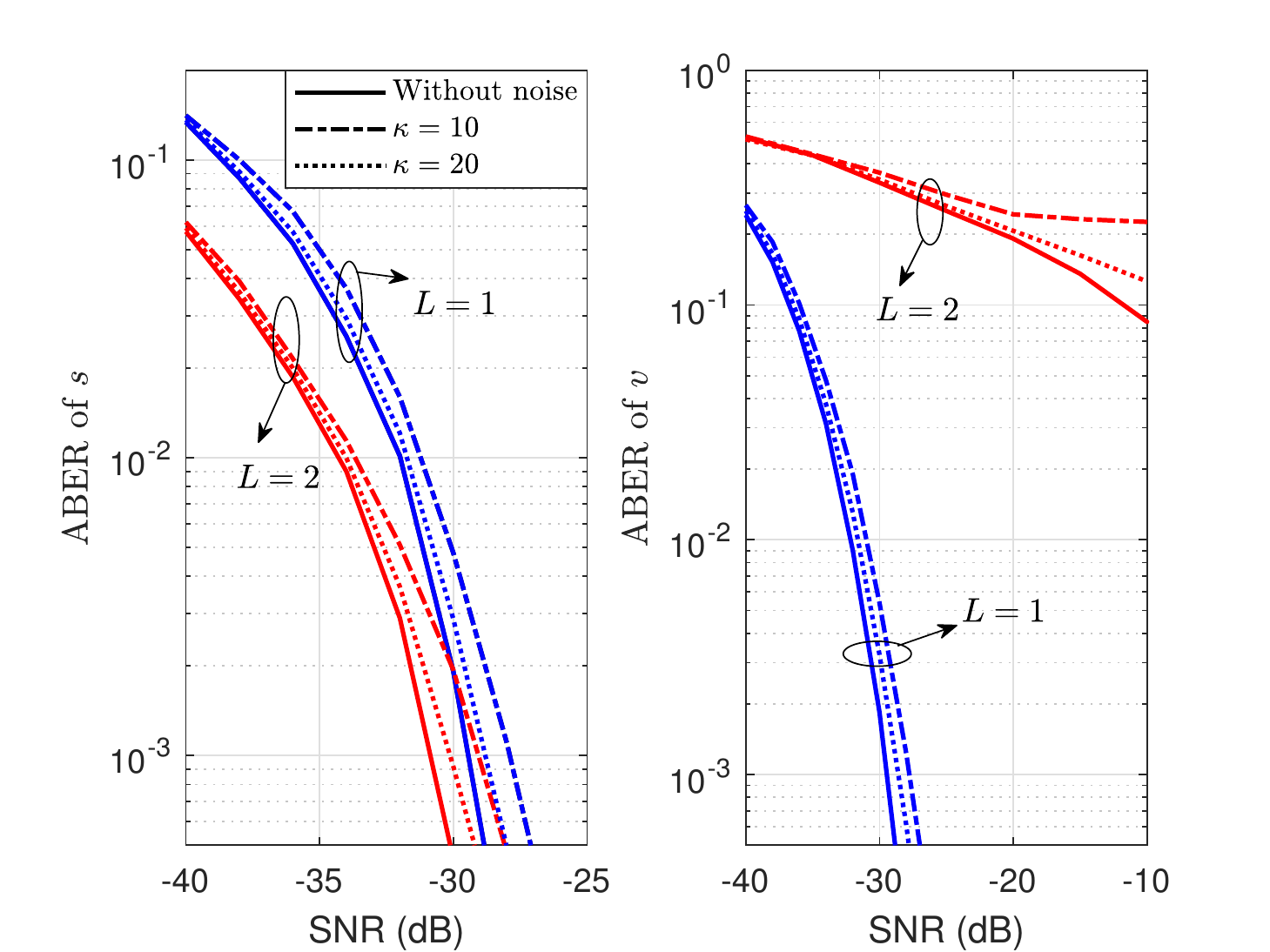}
		\caption{ABER of the SRPM scheme with phase noise.}
	\end{minipage}
\end{figure*}


In Fig. 2, we consider the impact of system parameters. Consistent with the theoretical analysis, doubling the reflecting elements or transmit antennas does bring a gain of 3$\enspace$dB. However, when the number of sub-surfaces, $L$, is increased, the ABER degrades significantly. Hence, to convey more information, it is preferred to increase $K$ than to increase $L$. Furthermore, note that the approximation is not tight enough with large $L$. This is because $N/L$ decreases and then makes then approximation in (18) inaccurate.\par

Then, we compare the ABER performance of the SRPM with that of state-of-the-art schemes in Fig.~3. The four benchmark schemes are described as: 1) The RIS-SSK scheme where only one transmit antenna is activated for extra information transfer \cite{im1}; 2) the PBIT scheme where sub-surfaces are randomly turned on or off with equal probability [14]; 3) the RIS-RPM scheme where one of the sub-surfaces is turned off [15]; 4) the RIS-QRM scheme where the phase offsets of $L=2$ sub-surfaces are set as $0$ and $\pi/2$, respectively [16]. It is observed from Fig. 3 that the proposed SRPM outperforms all benchmarks in terms of ABER. Moreover, the proposed SRPM has the advantage in realizing higher-rate transmissions.


Finally, we consider the inevitable phase noise in practice and  evaluate the sensitivity of the SRPM to the phase noise in Fig. 4.  We assume that the phase noise at the RIS follows the von Mises (circular normal) distribution with concentration parameter $\kappa$ \cite{ref18}. From Fig. 4, the performance loss in terms of SNR is about 1 dB and 2 dB for message $s$ when $\kappa=20$ and $10$, respectively. While for the phase-modulated message $\bm{v}$, the degradation in ABER is insignificant when there is only one sub-surface. However, for the SRPM with $L=2$, it is observed that the ABER deteriorates noticeably and there  exhibits error floor at high SNRs. This is because multiple sub-surfaces are not in phase and possible phase noise can impose a severe impact on the messages modulated on these RIS phases, especially for high SNR regimes.

\section{Conclusion}
In this letter, we proposed a novel SRPM scheme for conveying extra information, which is interpreted as a universal framework for modulating information in the phases of RIS. Analytical results show that SRPM can achieve a diversity order of $\frac{1}{2}$ for arbitrary parameters and doubling reflecting elements is equivalent to a $3\,$dB increment in the transmit power. Simulation results confirm that SRPM transmits more data and achieves lower ABER and further show that adding possible offsets is a more efficient way to transfer additional information than increasing the number of sub-surfaces. Moreover, the development of low-complexity detection algorithms for the SRPM remains an interesting future research direction.

\appendices
\section{Proof of Lemma 1}
To begin with, we express $\delta$ as
\begin{align}
\delta=\bm{h}^T(\bm{x}-\hat{\bm{x}})=\sum_{l=1}^L h_l (x_l-\hat{x}_l),
\end{align}
where $h_l=\sum_{n\in\mathcal{A}_l} \alpha_n$. Considering that there are sufficient reflecting elements in each sub-surface, it is safe to approximate $h_l$ for $l=1,\cdots,L$ by a Gaussian random variable via the Central Limit Theorem (CLT) for large $\vert \mathcal{A}_l \vert$. Accordingly, their mean and variance are
\begin{align}
\mathbb{E}\{h_l\}=\frac{\sqrt{\pi}N}{2L}\triangleq \mu_h, \enspace \mathbb{V}\{h_l\}=\frac{(4-\pi)N}{4L}\triangleq \sigma_h^2.
\end{align}
Let $d_{l,r}$ and $d_{l,i}$ denote the real and imaginary parts of $x_l-\hat{x}_l$, respectively. We can express the real and imaginary parts of $\delta$, denoted by $\delta_r$, and $\delta_i$, as $\sum_{l=1}^L d_{l,r}h_l$ and $\sum_{l=1}^L d_{l,i}h_l$, respectively. Given that $\{h_l\}_{l=1}^L$ are approximated as  i.i.d. Gaussian variables, $\delta_r$ and $\delta_i$ also follow the Gaussian distribution with
\begin{align}
\mathbb{E}\{ \delta_r \}&=\sum_{l=1}^L d_{l,r} \mu_h\triangleq \mu_r,\enspace \mathbb{V}\{\delta_r\}=\sum_{l=1}^L d_{l,r}^2 \sigma_h^2\triangleq \sigma_r^2,\nonumber\\
\mathbb{E}\{ \delta_i \}&=\sum_{l=1}^L d_{l,i} \mu_h\triangleq \mu_i,\enspace \mathbb{V}\{\delta_r\}=\sum_{l=1}^L d_{l,i}^2 \sigma_h^2\triangleq \sigma_i^2.
\end{align}
However, $\delta_r$ and $\delta_i$ may not be independent of each other. Then,let us define $\bm{c}\triangleq[\delta_r,\delta_i]^T$. We express its mean vector, $\bm{m}$, and covariance matrix, $\bm{C}$, as 
\begin{align}
\bm{m}=[\mu_r,\mu_i]^T,\enspace \bm{C}=\left[\begin{array}{cc}
\sigma_r^2&\sigma_{12}^2 \\ \sigma_{12}^2&\sigma_{i}^2
\end{array}\right],
\end{align}
where $\sigma_{12}^2$ is defined as
\begin{align}
&\sigma_{12}^2\triangleq \mathbb{E}\{\delta_r \delta_i\}- \mathbb{E}\{\delta_r\}\mathbb{E}\{\delta_i\}\nonumber \\
&=\sum_{l=1}^L d_{l,r}d_{l,i} (\mu_h^2+\sigma_h^2)+\sum_{l=1}^L \sum_{m\neq l}^L d_{l,r}d_{m,i} \mu_h^2 -\mu_r\mu_i.
\end{align}
Now, we get $\lambda=\bm{c}^T\bm{c}$. When $\det(\bm{C})= 0$, it is easily checked that at least one of $\delta_r$ and $\delta_i$ is zero or the two random variables are linearly dependent. Hence, $\lambda$ is expressed as $c\vert \delta_{x}\vert^2$, where $c$ is a constant, $x\in\{r,i\}$, and $\delta_x\neq 0$. Now, we find that $\lambda$ follows the non-central chi-square distribution and the MGF of $\lambda$ is
\begin{align}
\mathcal{M}_\lambda(t)=(1-2c\sigma_x^2t)^{-\frac{1}{2}}\exp \left( \frac{c\mu_x^2 t }{1-2c\sigma_x^2t}\right).
\end{align}
For the other case when $\det(\bm{C})\neq 0$, according to \cite[Eq.~(31)]{ref18}, we calculate the MGF of $\lambda$ as
\begin{align}
\mathcal{M}_\lambda(t)&=\left(\det(\bm{I}-2t\bm{C})\right)^{-\frac{1}{2}} {\rm{exp}} \left(-\frac{1}{2}\bm{m}^T\left[\bm{I}-(\bm{I}-2t\bm{C})^{-1} \right] \bm{C}^{-1} \bm{m}\right).
\end{align}
The proof completes.

\section{Proof of Theorem 1}
The diversity order is defined as the slop of the ABER at high SNR values, which is analytically evaluated as \cite{do}
\begin{align}
D=\lim\limits_{P\to \infty} -\frac{\mathrm{log}_2 P_b}{P}.
\end{align}
Plugging (15) into (24), we obtain
\begin{align}
D=\lim\limits_{P\to \infty} -\frac{\log_2 {\rm{Pr}}(s,\bm{v} \to \hat{s},\hat{\bm{v}})}{\log_2 P}.
\end{align}
Then, for the case that $\det(\bm{C})\neq0$, we use (12) to get
\begin{align}
\lim\limits_{P\to \infty} -\frac{\log_2 {\rm{Pr}}(s,\bm{v} \to \hat{s},\hat{\bm{v}})}{\log_2 P}=1.
\end{align}
On the other hand, for the case that $\det(\bm{C})=0$, we have
\begin{align}
\lim\limits_{P\to \infty} -\frac{\log_2 {\rm{Pr}}(s,\bm{v} \to \hat{s},\hat{\bm{v}})}{\log_2 P}=\frac{1}{2}.
\end{align}
For example, when $v_l=e^{j\Delta\theta}$ and  $\hat{v}_l=e^{-j\Delta\theta}$ for $l=1,2,\cdots,L$, we get $\det(\bm{C})=0$. Therefore, the second case in (27) always exists. Then we conclude that the diversity order of the proposed SRPM is equal to $\frac{1}{2}$ and complete the proof.

\end{document}